\newcommand{\sa}{\lambda^{1/2}}
\newcommand{\R}{\mathbb R}
\newcommand{\eps}{\epsilon}
\newtheorem{theorem}{Theorem}[section]
\newtheorem{lemma}{Lemma}[section]
\newtheorem{prop}{Proposition}[section]
\newtheorem{cor}{Corollary}[section]
\newtheorem{remark}{Remark}[section]
\newtheorem{acknowledgment*}{Acknowledgment}
\newcommand{\be}{\begin{equation}}
\newcommand{\ee}{\end{equation}}
\newcommand{\bem}{\begin{multline}}
\newcommand{\eem}{\end{multline}}
\begin{document}
		\title{The linearized  Poisson-Nernst-Planck system as  heat flow on the interval under non-local boundary conditions }
	\author{Gershon Wolansky \\ Department of Mathematics, Technion, Haifa 32000, Israel}

	\maketitle
	\begin{abstract}
		The linearized  of the Poisson-Nernst-Planck (PNP)  equation under closed ends around a neutral state is studied. It is reduced to a damped heat equation under non-local boundary conditions, which leads to a stochastic interpretation of the linearized equation as a Brownian particle which jump and is reflected, at Poisson distributed time, to one of the end points of the channel, with a probability which is proportional to its distance from this end point.  An explicit expansion of the heat kernel reveals the eigenvalues and eigenstates of both the PNP equation and its adjoint. For this, we take advantage of the representation of the resulvent operator and recover the heat kernel by applying the  inverse Laplace transform.  
	\end{abstract}
	\section{Introduction}
The Poisson -Nernst-Planck (PNP) system  \cite{debay} is a fundamental model for electrodiffusion and is one of the main tools in modeling ion channels in cell membrane(see, e.g. \cite{mem}). In one of its simplest forms, it contains a pair of drift-diffusion equations for positive and negatively charged ions, coupled with the equation for the electric  field induced by the charges. 
	
	We concentrate  on the case of two  types of ions (positively ($C_+$) and negatively ($C_-$) charged) and  a closed channel, where the flux of $C_\pm$  is zero at the ends of the channel, hence the number of ions of each type  (and, in particular, the total charge $C_+-C_-$) is preserved in time. 
	
	The physical model behind the PNP is a drift diffusion for charged particles, where the diffusion is due to independent Brownian motions of the ions, and the drift is due to the external field induces by the potential difference between the ends of the channel, and the mean electric field generated by the moving ions. 
	
	Thus, the PNP can be considered as  a  system of Kolmogorov forward equation,
	\cite{kol, BM}, whose solutions represent the probability distribution of a test particle for each type of ions in the system. 


In the case of zero external field,  the neutral case  ($C_+=C_-$) induces   a steady,  uniform distribution for both  charges. 
A linearization of this equation around this constant neutral case is reduced, up to a re-scaling of the time, into the naive looking damped diffusion equation \cite{pnp, mem}\footnote{ I wish to thank Dr. Doron Elad for turning my attention to this formulation} for the {\em local charge} $u\approx C_+-C_-$: 
	\be\label{dhe} u_t=u_{xx} -\kappa^2u \ , \  0<x<1, \ \ t\geq 0 \ee
	where the interval $[0,1]$ is the channel,  $u(x,t)$ is the local charge  at $x\in[0,1], t\geq 0$  and $\kappa^2$  is the inverse Debye screening length.   This looks like a fairly naive equation. However, the boundary conditions are 
	\be\label{bcc1}(u_x+(k^2/\eps)E)_{x=0,1}=0\ee
	where the electric field $E$ is given by the Poisson equation 
	\be\label{bcc2}-\eps E_x=u, \ \ \ \ \int_0^1 Edx=V \ ,\ee
	driven by the voltage difference $V$ across the end points $x=0,1$. These are non-local boundary conditions. Indeed, 
we show that (\ref{bcc1}, \ref{bcc2}) can be reduced to the following
$$    u_x(0) = -k^2\int_0^1(1-s) u(s)ds +\kappa^2V  \  , 
  u_x(1) = \kappa^2\int_0^1 s u(s)ds +\kappa^2V  \ . $$
  The steady state for the linearized problem can easily  be obtained:
  \be\label{baru}\bar{u} (x)= \frac{\kappa V}{\cosh(\kappa/2)}\sinh(\kappa( x-1/2))\ee
so we can subtract it from the solution $u$ of equation to get a homogeneous boundary conditions
\be\label{bcc3}    u_x(0) = -k^2\int_0^1(1-s) u(s)ds   \  , 
u_x(1) = \kappa^2\int_0^1 s u(s)ds  \ . \ee
Some versions of parabolic equations  under non-local boundary conditions were studied by several authors (see, e.g.\cite{St, Ardent}). 
A stochastic  interpretation of linear  diffusion equations under non-local boundary conditions goes back to Feller \cite{Feller2}. In that paper he extended  his seminal paper \cite{Feller1} 
to non-local boundary conditions, and interpreted the diffusion equation in terms of a Brownian particle  which may undergo a jump from a point on the boundary of the interval to a distributed position at  the interior. This extension was later studied by several authors, see e.g \cite{Ardent, Bo, ros1}. However, in all these cases the process is allowed to jump from a boundary point to the interior, and not the other way around. This will be the case if, e.g., $\kappa^2$ is replaced by $-\kappa^2$ in (\ref{bcc3}).

The boundary conditions (\ref{bcc3}) associated with the operator $d^2/dx^2 -\kappa^2$ suggests a diffusion process which jump at a random Poisson time of mean $\kappa^2$ from an {\em 
inner}  point $x\in (0,1)$ and reflected at the endpoint $x=0$ with probability $1-x$, and at  the endpoint $x=1$ with probability $x$. In this sense, it is a {\em forward} Kolmogorov equation representing the evolution of a probability  distribution of the charge. \footnote{Even though $u$ is not necessarily  of definite sign, we can consider the positive and negative parts of $u$ independently, using the linearity of this equation.}

The {\em heat kernel} $K(x,y,t)$ of such an equation generates the solutions 
$$u(x,t)=\int_0^1 K(x,y,t)u(y,0)dy \ . $$
This kernel represents the probability of the particle to be at position $x$ at time $t+s$, conditioned that it was at point $y$ at time $s$. 
In particular
$$ K(x,y, t)\geq 0,\ \ \text{for} \ \ (x,y)\in (0,1)\times (0,1), \ t>0, \ \ \ \int_0^1K(x,y,t)dx=1$$
and $\lim_{t\downarrow 0} K(x,y,t)=\delta_{x-y}$. 

The adjoint equation, then, represents the backward Kolmogorov equation  of the process. In general, it is also a diffusion equation of the same form and adapted boundary condition, whose kernel $K^{*}$ is given by the interchange of $x$ and $y$: $K^{*}(x,y,t)=K(y,x,t)$. However, in the case of b.c (\ref{bcc3}), an explicit form of the adjoint operator is not clear. 

	In this paper we attempt to calculate the spectral expansion of the  heat kernel. The information encoded in this expansion contains the eigenvalues, as well as the eigenfunctions of both the operator and its adjoint.

	To obtain this, we take advantage on the explicit solutions of the {\em resulvent} $R=R(\lambda,x,y)$ where $\lambda\in \mathbb{C}$, $(x,y)\in (0,1)\times(0,1)$:
	$$ \partial^2_xR +\lambda R +\delta_{x-y}=0$$
where  $\delta_{x-y}$ is the Dirac delta function  and  $R$ satisfies  the boundary conditions (\ref{bcc3}) in the $x$ variable.  These solutions can be expressed locally as a combination of the trigonometric functions $\sin(\sa x)$ and $\cos(\sa x)$. It turns out that the solution of the resulvent equation exists whenever $Re(\lambda)<-\kappa^2$. 
	This  resulvent $R$ can also  be written in terms  of the heat kernel  $K$ 
(see \cite{DS}, and also the review in the Appendix):
	\be\label{Res} R(\lambda,x,y)= \int_0^\infty e^{(\lambda+\kappa^2)t}K(x,y,t)dt \ . \ee
	It turns out that $R$ is a meromorphic function of  $\lambda$ in the complex plane, analytic if $Re(\lambda)<-\kappa^2$, and admits a countable number of simple poles in the half plane $Re(\lambda\geq -\kappa^2)$ (including $\lambda=-\kappa^2$). 
	

	Under some conditions which we can verify (c.f. Appendix) we can recover the heat kernel form (\ref{Res}) using the inverse Laplace transform via
	$$ K(x,y,t)=e^{-\kappa^2 t}\frac{1}{2\pi i}\oint_\Gamma e^{-\lambda t}R(\lambda,x,y)d\lambda$$
	where $\Gamma$ is a contour enclosing all the poles of $R$. Then, we use the Residue theorem \cite{BS}  to evaluate the contour  integral.

	The main results are summarized below:
	\begin{theorem}\label{th1}
		%
		The heat kernel of (\ref{dhe}, \ref{bcc3}) is given by
	\begin{multline} \label{hetk}K(x,y,t)=\frac{\kappa\cosh(\kappa/2)\cosh(\kappa(x-1/2))}{2\sinh(\kappa/2)}\\+\sum_{n=1}^\infty
	\frac{\kappa^2 A(\lambda _n, y) (\sqrt{\lambda_n}+\kappa^2/\sqrt{\lambda_n})}{2\sqrt{\lambda_n}Det^{'}(\lambda_n)}
	\sin\left(\frac{\sqrt{\lambda_n}(1-2x)}{2}\right) e^{-(\lambda_n+\kappa^2) t} \\
	+ \frac{1}{2}\sum_{n=1}^\infty  \cos(2n\pi x)
	\left( \cos(2n\pi y)+\frac{1}{ n\pi(4n^2\pi^2+\kappa^2)} 
	\right)
	e^{-(4n^2\pi^2 +\kappa^2)t}\end{multline}
	where $\lambda _n$ are the roots
	of 
		$$2\tan(\lambda_n^{1/2}/2)=\lambda_n^{1/2}\kappa^{-2}(\lambda_n+\kappa^2) \ , $$
		 $Det$ is given by (\ref{denum}) and    
	$$ A(\lambda, y):= \frac{1}{\pi^2} \sum_{m=1}^\infty \frac{2\cos((2m+1)\pi y)}{(2m+1)^2(\pi^2(2m+1)^2-\lambda)} \ . $$
	      
	\end{theorem}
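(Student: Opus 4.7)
The plan is to construct $R(\lambda,x,y)$ explicitly as a piecewise Green's function, identify its poles in $\lambda$, and then recover $K$ by the inverse Laplace contour integral sketched in the introduction.

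First I would solve the resolvent equation by writing $R$ on each of $[0,y]$ and $[y,1]$ as a linear combination of $\sin(\sqrt{\lambda}\,x)$ and $\cos(\sqrt{\lambda}\,x)$. Continuity of $R$ at $x=y$ and the jump condition $[\partial_x R]_{x=y}=-1$ imposed by the $\delta$-source reduce four free constants to two. Imposing the non-local boundary conditions (\ref{bcc3}), after computing the integrals $\int_0^1 (1-s)R(\lambda,s,y)\,ds$ and $\int_0^1 s\,R(\lambda,s,y)\,ds$, produces a $2\times 2$ linear system in the two remaining constants. Its determinant is precisely the quantity $Det(\lambda)$ of (\ref{denum}), and Cramer's rule gives a closed form $R(\lambda,x,y)=N(\lambda,x,y)/Det(\lambda)$. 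The building block $A(\lambda,y)$ is the natural partial expansion of one of the above integrals in the Neumann cosine basis $\{\cos((2m+1)\pi y)\}_{m\geq 0}$ restricted to odd modes, so I would reorganize the numerator to make $A(\lambda,y)$ appear explicitly.

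Next I would locate the poles of $R$ by exploiting the decomposition of the problem about the midpoint $x=1/2$. In the subspace even about $1/2$, the standard Neumann eigenfunctions $\cos(2k\pi x)$ ($k\geq 1$) already satisfy the non-local boundary conditions since $\int_0^1 (1-s)\cos(2k\pi s)\,ds=0$, producing a family of simple poles at $\lambda=4k^2\pi^2$. The steady-state mode $\phi(x)=\cosh(\kappa(x-1/2))$ accounts for the single pole at $\lambda=-\kappa^2$ which, after multiplication by $e^{-\kappa^2 t}$, contributes the time-independent term. In the subspace odd about $1/2$, the Neumann modes are not preserved, so the vanishing of $Det(\lambda)$ reduces to the transcendental equation (\ref{tanz}) with simple roots $\{\lambda_k\}$. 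I would verify simplicity and the location of all these poles (to the right of $\lambda=-\kappa^2$) so that the three decay factors in (\ref{hetk}) are well defined.

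Finally I would apply the inversion formula
\[
K(x,y,t)=e^{-\kappa^2 t}\,\frac{1}{2\pi i}\oint_\Gamma e^{-\lambda t}R(\lambda,x,y)\,d\lambda
\]
with $\Gamma$ expanding to enclose all poles, and compute the residue at each pole. The residue at $\lambda=-\kappa^2$ gives the first term of (\ref{hetk}); at each $\lambda_k$, the residue produces the corresponding summand of the second sum, with the factor $\sqrt{\lambda_k}+\kappa^2/\sqrt{\lambda_k}$ arising from $Det'(\lambda_k)$ and the numerator specialising to the odd eigenfunction $\sin(\sqrt{\lambda_k}(1-2x)/2)$; at $\lambda=4k^2\pi^2$, the residue produces the diagonal $\cos(2k\pi x)\cos(2k\pi y)$ together with the correction $\cos(2k\pi x)/(k\pi(4k^2\pi^2+\kappa^2))$ coming from the component of $N(\lambda,x,y)$ that is not killed by the cancellation with $Det(\lambda)$. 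The main obstacles are twofold: first, the contour collapse requires showing that $|R(\lambda,x,y)|$ decays adequately along large arcs in the left half plane, which is non-trivial because the operator is not self-adjoint and is left to the Appendix; second, the residue at the ``trivial'' poles $\lambda=4k^2\pi^2$ is delicate because both $Det$ and $N$ partially vanish there, so careful bookkeeping of the two competing expansions (the pure even-mode contribution and the contribution of the non-local coupling) is required to extract the explicit correction in the third sum of (\ref{hetk}).
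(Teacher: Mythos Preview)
Your overall strategy---build the resolvent explicitly, locate its poles, and recover $K$ by residues from the inverse Laplace integral---is the same as the paper's, and your pole classification (even modes at $4k^2\pi^2$, odd modes at the roots $\lambda_k$ of (\ref{tanz}), and the ground pole at $-\kappa^2$) matches. The genuine difference is in how the resolvent is constructed. You propose the direct piecewise Green's function on $[0,y]\cup[y,1]$, which gives $R$ as a closed trigonometric expression in $x$ and $y$ divided by $Det(\lambda)$. The paper instead writes $R=R_N+a(\lambda,y)\sin(\sqrt\lambda\,x)+b(\lambda,y)\cos(\sqrt\lambda\,x)$, where $R_N$ is the \emph{Neumann} resolvent, already known in eigenfunction form. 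This buys two things. First, $A(\lambda,y)$ is not something to be ``reorganised into'' after the fact: it appears immediately as $\int_0^1 sR_N(\lambda,s,y)\,ds-\tfrac{1}{2\lambda}$, so the $2\times2$ system for $a,b$ already has $A$ on the right-hand side and (\ref{17}) drops out with modest algebra. Second, the inverse Laplace transform of $R_N$ is just the Neumann heat kernel $K_N$, so the diagonal terms $\cos(k\pi x)\cos(k\pi y)$ come for free; the residues of the correction $R-R_N$ at the poles $\lambda=(2k{+}1)^2\pi^2$ of $A$ then \emph{cancel} the odd Neumann modes of $K_N$, leaving only the even ones, while the separate pole of the last term of (\ref{17}) at $(2k\pi)^2$ supplies the extra $\frac{1}{k\pi(4k^2\pi^2+\kappa^2)}$. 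Your piecewise route would have to reproduce this same cancellation by hand, and the step where you ``reorganise the numerator to make $A(\lambda,y)$ appear'' is precisely where the paper's $R_N$-ansatz does the work for you. (Minor point: the factor $\sqrt{\lambda_k}+\kappa^2/\sqrt{\lambda_k}$ in the second sum comes from the numerator of (\ref{17}), not from $Det'(\lambda_k)$ as you suggest.)
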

	
	In particular,
	\begin{multline} \label{calL0}K_0(x,y,t):=\frac{\kappa\cosh(\kappa/2)}{2\sinh(\kappa/2)}\psi_0(x)\phi_0(y) e^{\kappa^2 t}+\sum_{n=1}^\infty
	\frac{\kappa^2 (\sqrt{\lambda_n}+\kappa^2/\sqrt{\lambda_n})}{2\sqrt{\lambda_n}Det^{'}(\lambda_n)}
	\psi^{(1)}_n(x)\phi^{(1)}(y)e^{-\lambda_n t} \\
	+ \frac{1}{2}\sum_{n=1}^\infty \psi_n^{(2)}(x)\phi_n^{(2)}(y)
	e^{-4n^2\pi^2 t}
	\end{multline}
	is the heat kernel for the operator ${\cal L}_0=d^2/dx^2$ on the domain  (\ref{bcc3} ), where 
\begin{itemize}
	\item 
$\psi^{(1)}_n(x)=\cos(2n\pi x)$, 
	\item $\psi^{(2)}_n(x)=\sin(\lambda_n^{1/2} (x-1/2))$
\item  $\psi_0(x)=\cosh(\kappa(x-1/2))$ 
\end{itemize}
 are the eigenstates of ${\cal L}_0$, and \begin{itemize}
 	\item 
 	 $\phi^{(1)}_n(y)=\cos(2n\pi y)+\frac{1}{ n\pi(4n^2\pi^2+\kappa^2)} $,
\item $\phi^{(2)}_n(y)=A(\lambda_n,y)$ ,
\item $\phi^{(0)}(y)=1$ 
\end{itemize}
 are the eigenstates of its adjoint ${\cal L}_0^\dag$. 
This poses a challenging question regarding the formulation of this problem, since (except of the constant), $\phi_n^{(1)}$, $\phi_n^{(2)}$ are not solutions of $\phi_{xx}+\lambda\phi=0$ for any $\lambda\in \mathbb{C}$. It seems  that the adjoint operator may not be given by a differential one, 
	{\em and       the non locality of the boundary conditions leaks into the operator itself. }(c.f Section \ref{conc}). 
	\begin{remark}
	All eigenvalues $\lambda_k$ for $k>0$ are real and positive (excluding the "ground" eigenvalue $\lambda_0=-\kappa^2$. 
	\end{remark}

\section{The linearized PNP system}\label{exap1} 

	The one dimensional PNP equation takes the form \cite{pnp}
$$ C_{+,t}=D_+\left[ C_{+,x} + \frac{ze}{k_B T} EC_+ \right]_x $$
$$ C_{-,t}=D_-\left[ C_{-,x} - \frac{ze}{k_B T} EC_- \right]_x $$
on the interval $[0,1]$, 
where $C_\pm$ is the concentration of positive/negative ions,  and $E$ is the electric field given in terms of the concentrations $C_\pm$ and the potential difference $V$: 
$$ -\eps E _x = ze(C_+-C_-), \ \ \int_0^1 E(x,t)dx=V $$
The  special case of non penetrating charges corresponds to zero flux  on the boundary 
$$ C_{+,x}(0,t)+\frac{ze}{k_B T} C_+(0,t)E(0,t)= C_{-,x}(1,t)-\frac{ze}{k_B T} C_-(1,t)E(1,t)=0\ . $$

In the neutral case $C_+=C_-=\eta$ and $E=0$. We linearize this system :
$$C_+=\eta+c_+, \ \ C_-=\eta+c_-, \ \ \ E<<1$$
and ignore all terms of second order in $Ec_\pm$ 
to obtain
$$ u_t=D(u_{xx} -\kappa^2 u) + Bc_{xx}$$
$$c_{t}=B(c_{xx}-\kappa^2u)+u_{xx}$$
$$ -\eps E_x=u$$
where $D=\frac{D_++D_-}{2}$, $B=\frac{D_+-D_-}{2}$,  $\kappa^2=\frac{2\eta z e}{k_BT}$, $u=ze(c_+-c_-)$,\\ $c=ze(c_++c_-)$, subject to \\
 $\int_0^1Edx=V$, $(u_x+(\kappa^2/e)E)_{x=0,1}=0$, \ $(c_x)_{x=0,1}=0$. 
 
 Here we concentrate in the case $B=0$ which reduces to a single equation on $u$. Without loss of generality we also assume $D=1$:
 \be\label{pnp} -\eps E_x=u \ ,  \ \ \  \ \int_0^1Edx=V, \ \ (u_x+(\kappa^2\eps) E)_{x=0,1}=0\ . \ee
  \be\label{dheat} u_t=u_{xx} -\kappa^2u \  . \ee
 \begin{lemma}\label{2.1}
 	The three b.c (\ref{pnp}), together with the constraint $-\eps E_x=u$ can be introduced as a pair of non-local conditions:     
 	$$ u_x(0,t)=-\kappa^2\int_0^1 (1-s)u(s,t)ds-\eps  \kappa^2V, \ \ \   u_x(1,t)=\kappa^2 \int_0^1 su(s,t)ds-\eps\kappa^2V$$
 \end{lemma}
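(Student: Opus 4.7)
The plan is to eliminate $E$ from the system by using the Poisson equation $-\eps E_x=u$ together with the integral normalisation $\int_0^1 E\,dx=V$, and then substitute the resulting expressions for $E(0,t)$ and $E(1,t)$ into the two pointwise conditions $u_x+(\kappa^2\eps)E=0$ at $x=0,1$.

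First I would integrate $-\eps E_x=u$ from $0$ to $x$ to obtain
\[
E(x,t)=E(0,t)-\frac{1}{\eps}\int_0^x u(s,t)\,ds,
\]
and then integrate this identity once more from $0$ to $1$. Using Fubini to rewrite
\[
\int_0^1\!\!\int_0^x u(s,t)\,ds\,dx=\int_0^1 u(s,t)\int_s^1 dx\,ds=\int_0^1(1-s)u(s,t)\,ds,
\]
and invoking $\int_0^1 E(x,t)\,dx=V$, one can solve for $E(0,t)$ in closed form:
\[
E(0,t)=V+\frac{1}{\eps}\int_0^1(1-s)u(s,t)\,ds.
\]
Next I would obtain $E(1,t)$ either directly from $E(1,t)=E(0,t)-\frac{1}{\eps}\int_0^1 u(s,t)\,ds$ or by integrating $-\eps E_x=u$ from $x$ to $1$ and applying the same Fubini argument; either way yields
\[
E(1,t)=V-\frac{1}{\eps}\int_0^1 s\,u(s,t)\,ds.
\]

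Finally, I would substitute these two boundary values of $E$ into the no-flux conditions $u_x(0,t)=-\kappa^2\eps\,E(0,t)$ and $u_x(1,t)=-\kappa^2\eps\,E(1,t)$. The factors of $\eps$ cancel against the $1/\eps$ in the integral terms, leaving the advertised expressions
\[
u_x(0,t)=-\kappa^2\!\!\int_0^1(1-s)u(s,t)\,ds-\eps\kappa^2 V,\qquad
u_x(1,t)=\kappa^2\!\!\int_0^1 s\,u(s,t)\,ds-\eps\kappa^2 V.
\]

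There is no genuine obstacle here: the argument is essentially a one-line manipulation of the Poisson equation, with the only step worth checking carefully being the swap of the order of integration that converts $\int_0^1\!\!\int_0^x u$ into the moment $\int_0^1(1-s)u\,ds$ (and, symmetrically, into $\int_0^1 s\,u\,ds$). It is also worth noting that the two derived conditions are consistent with one another: subtracting them reproduces $\eps(E(0,t)-E(1,t))=\int_0^1 u(s,t)\,ds$, which is exactly the integrated Poisson equation, so no information has been lost in the reformulation.
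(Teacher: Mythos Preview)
Your argument is correct and is essentially the same as the paper's: both integrate $-\eps E_x=u$, use the Fubini step to convert $\int_0^1\!\int_0^x u$ into the moment $\int_0^1(1-s)u$, and combine with $\int_0^1 E=V$ and the pointwise flux conditions. The only cosmetic difference is the order of substitution---the paper first writes $E(0)$ in terms of $u_x(0)$ via the flux condition and then applies $\int_0^1 E=V$ to isolate $u_x(0)$, whereas you first solve for $E(0),E(1)$ from the Poisson data and then plug into the flux conditions---but this is the same computation rearranged.
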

 \begin{proof}
 	By the field equation and the boundary condition (\ref{pnp}, \ref{bcc3}) admits a classical $C^2$ solution and $u(x,0)\geq 0$ 
 	$$ E(x)=E(0)+\int_0^xE^{'}(s)ds =\kappa^{-2}\eps^{-1}u_x(0)-\eps^{-1}\int_0^xu(s)ds $$
 	From $\int_0^1E=V$ we get
 	$$ V= \kappa^{-2}\eps^{-1}u_x(0) - \eps^{-1}\int_0^1\int_0^xu(s)dsdx= \eps^{-1}\left( \kappa^{-2}u_x(0) - \int_0^1(1-s) u(s)ds \right)  \ . $$
 Likewise
 	$$ E(x)=E(1)-\int_x^1E^{'}(s)ds =\kappa^{-2}\eps^{-1}u_x(1)+\eps^{-1}\int_x^1u(s)ds $$
and so
 $$ V= \eps^{-1}\kappa^{-2}u_x(1) + \eps^{-1}\int_0^1\int_x^1u(s)dsdx=  \eps^{-1}\left( \kappa^{-2}u_x(1) + \int_0^1 s u(s)ds \right)  \ . $$
 
 \end{proof}

\section{Properties of the  linrarized PNP}
We start from the following 
We start from the following 
\begin{prop}\label{prop1} 
	If equation (\ref{dhe}, \ref{bcc3}) admits a  classical solutions then
	\begin{description}
		\item{a.}
		The integral $\int_0^1u(x,t)dx$ is preserved.
		\item{b.} If $u(\cdot,0)$ is non-negative then  $u(\cdot ,t) $ is non-negative for any $t>0$. 
		\item{c.} For any $t>0$, $\|u(\cdot, t)\|_\infty \leq \cosh(\kappa/2) \|u(\cdot,0)\|_\infty$. 
	\end{description}
\end{prop}
\begin{proof} 
	(a): \ Follows immediately upon integration, taking advantage of the fact that the kernels of the integrals in (\ref{bcc3}) $(1-x)$ and $x$ sums to one. \\
	(b): \  Follows from an elementary observation involving the maximum principle. Indeed, let
	$u_\eps$ be a solution of the equation $u_{\eps,t}= u_{\eps, xx} +\kappa^2u+\eps$, under the boundary condition (\ref{bcc3}), where $\eps>0$.   Evidently $u_\eps\rightarrow u$ where $\eps\rightarrow 0$. 
	Let 
	$u(x,0)$ be strictly positive, and let $x_0\in[0,1]$, $t_0>0$ such that $u_\eps(x,t)>0$ for any $t\in[0, t_0)$, $x\in[0,1]$ and 
	$x\not= x_0$, $t=t_0$, while $u_\eps(x_0, t_0)=0$. From the boundary conditions we obtain that $u_{\eps, x}(0,t_0)<0$, $u_{\eps, x} (1,t_0)>0$, so $x_0\not= 0,1$. However, $u_{\eps, xx}(x_0, t_0)\geq 0$ since $x_0$ is an inner minimum. In particular $u_{\eps, t}\geq \eps$ by the equation. It follows that $u_\eps$ is, indeed strictly positive for any $\eps >0$, and the weak inequality is preserved in the limit $\eps =0$.\\
	(c):\ Assume, without limitation of generality, that $u(\cdot,0)\geq 0$. Using Lemma \ref{eigen}, or by a direct substitution, we obtain that $d^2\psi_0/d^2x -\kappa^2\psi_0=0$ where $\psi_0=\cosh(\kappa(x-1/2))$. Thus, $u(x,0)\leq \|u(\cdot,0)\|_\infty \psi_0(x)$ for any $x\in[0,1]$. In particular, $w(x,0):= \|u(\cdot,0)\|_\infty \psi_0(\cdot)-u(\cdot,0)$ is non-negative. Since $w$ satisfies 
	(\ref{dhe}, \ref{bcc3}) it follows by (b) that $w(\cdot,t)$ is non-negative for any $t>0$. Hence
	$0\leq u(x,t)\leq \|u(\cdot,0)\|_\infty \psi_0(\cdot)\leq  \|u(\cdot,0)\|_\infty \cosh(\kappa/2)$  for any $t\geq 0$. 
	
\end{proof} 
Let $u$ be a solution of (\ref{dhe}, \ref{bcc3}).  Substitute 
\be\label{utov} v(x,t)=e^{\kappa^2 t}u(x,t)\ . \ee
 Then  $v$ is a solution of
 \begin{multline}\label{pnpv} v_t=v_{xx} \ ,  x\in[(0,1), t>0 \\ v(\cdot, t)\in {\cal D}:= \left\{ w\in C^1[0,1]\cap C^2(0,1) ;   w_x(0)=-\kappa^2\int_0^1(1-s)w(s)ds, \ \ \ w_x(1)=\kappa^2\int_0^1sw(s)ds\right\}\ .  \end{multline}
\subsection{Eigenvalues and eigenfunctions}
The eigenfunctions of the operator $d^2/dx^2$ are given by $a\sin(\sa x)+b\cos(\sa x)$.  Substitute this in (\ref{pnpv}) we get 
\begin{multline}\label{eq10}a\left( \lambda^{1/2}+\kappa^2\lambda^{-1/2} -   \frac{\kappa^2}{\lambda}\sin(\lambda^{1/2})\right)+b\kappa^2\left( \frac{1-\cos\lambda^{1/2}}{\lambda}\right) =0\end{multline}
\begin{multline}\label{eq20}  a\left(   \lambda^{1/2}\cos(\lambda^{1/2})-\frac{\kappa^2 \sin(\lambda^{1/2})}{\lambda} +\frac{\kappa^2\cos(\lambda^{1/2})}{\lambda^{1/2}}
\right)\\
+b\left(
-\lambda^{1/2}\sin(\lambda^{1/2}) -\frac{\kappa^2\sin(\lambda^{1/2})}{\lambda^{1/2}} - \frac{\kappa^2\cos(\lambda^{1/2})-1}{\lambda}
\right)=0  \end{multline}
The system (\ref{eq10}, \ref{eq20}) is a linear system for the coefficients $a, b$.  
The determinant of this system is 
\begin{multline}\label{denum}Det(\lambda)=2\kappa^2 \lambda^{-1/2}(1-\cos(\sa))\left(1+ \frac{\kappa^2}{\lambda} \right)- \sin(\sa)\left(\frac{\kappa^2}{\sa}+\sa\right)^2 
\\
= \sin(\sa)\left(1+\frac{\kappa^2}{\lambda} \right)\left( \frac{2\kappa^2}{\sa} \tan (\sa/2) - \kappa^2-\lambda\right) \end{multline}
\begin{lemma}\label{eval}
	$\sa Det(\lambda )$ is a meromorphic function on the complex plane. 
	The roots of $Det(\lambda )=0$ are given by  $(2k\pi)^2$ where $k\in\mathbb{Z}$.  In addition
	$\lambda_m$, $m\in\mathbb{N}$ where $\{\lambda_m\}$ are the  roots of
	\be\label{tanz}2\tan(\lambda_m^{1/2}/2)=\lambda_m^{1/2}\kappa^{-2}(\lambda_m+\kappa^2) \ . \ee
	\par
	In addition, $\lambda=0$ is a root of $Det(\lambda )$ {\em only if}  $\kappa^2\not= 12$, and it coincides with a root $\lambda_1(\kappa)$ of (\ref{tanz}) as $\kappa^2\rightarrow 12$. 
	
	In addition, $\lambda=-\kappa^2$ is the only negative root of $Det$, and it is a simple one. 
\end{lemma}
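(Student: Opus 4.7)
The plan is to exploit the explicit factored form of $Det(\lambda)$ in (\ref{denum}) and analyze each factor separately. Starting from
$$Det(\lambda) = \sin(\sqrt{\lambda})\Big(1+\tfrac{\kappa^2}{\lambda}\Big)\Big[\tfrac{2\kappa^2}{\sqrt{\lambda}}\tan(\sqrt{\lambda}/2) - \kappa^2 - \lambda\Big],$$
I would multiply by $\sqrt{\lambda}$ and apply the identities $\sin(\sqrt{\lambda})\tan(\sqrt{\lambda}/2) = 2\sin^2(\sqrt{\lambda}/2)$ and $1-\cos\sqrt{\lambda} = 2\sin^2(\sqrt{\lambda}/2)$ to obtain
$$\sqrt{\lambda}\,Det(\lambda) = \frac{\lambda+\kappa^2}{\lambda}\,B(\lambda), \qquad B(\lambda) := 4\kappa^2\sin^2(\sqrt{\lambda}/2) - (\lambda+\kappa^2)\sqrt{\lambda}\sin(\sqrt{\lambda}).$$
Since both $\sin^2(\sqrt{\lambda}/2) = (1-\cos\sqrt{\lambda})/2$ and $\sqrt{\lambda}\sin(\sqrt{\lambda})$ are even in $\sqrt{\lambda}$, $B$ is entire in $\lambda$. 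A Taylor expansion at the origin yields $B(\lambda) = \tfrac{\kappa^2-12}{12}\lambda^2 + O(\lambda^3)$, which cancels the $1/\lambda$ prefactor and confirms that $\sqrt{\lambda}\,Det(\lambda)$ is entire on $\mathbb{C}$, a fortiori meromorphic.

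To locate the zeros I would use the alternative factoring $B(\lambda) = 2\sin(\sqrt{\lambda}/2)\bigl[2\kappa^2\sin(\sqrt{\lambda}/2) - (\lambda+\kappa^2)\sqrt{\lambda}\cos(\sqrt{\lambda}/2)\bigr]$. The first factor vanishes precisely at $\sqrt{\lambda} = 2k\pi$, i.e.\ at $\lambda = (2k\pi)^2$ for $k\in\mathbb{Z}$; the second, after dividing by $\cos(\sqrt{\lambda}/2)$, reproduces exactly equation (\ref{tanz}). The prefactor $(\lambda+\kappa^2)/\lambda$ contributes the additional root $\lambda=-\kappa^2$: using $\sin(i\kappa/2) = i\sinh(\kappa/2)$ I compute $B(-\kappa^2) = -4\kappa^2\sinh^2(\kappa/2) \ne 0$, so this root is genuinely supplied by the prefactor. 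Simplicity would then follow from derivative computations: at $\lambda = (2k\pi)^2$, $k\ne 0$, differentiation gives $B'(\lambda) = -\tfrac12(4k^2\pi^2 + \kappa^2) \ne 0$; at $\lambda=-\kappa^2$ the prefactor has a simple zero against a nonvanishing bracket; and at the $\lambda_m$'s simplicity comes from the implicit function theorem applied to (\ref{tanz}) together with checking that the derivative of $2\tan(\sqrt{\lambda}/2) - \sqrt{\lambda}(\lambda+\kappa^2)/\kappa^2$ does not vanish at these points.

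The coalescence at $\kappa^2 = 12$ is built into the same expansion: the leading coefficient $(\kappa^2 - 12)/12$ of $B$ at the origin vanishes exactly then, so the zero of $\sqrt{\lambda}\,Det$ at $\lambda=0$ jumps from order one to order at least two. By continuity of the roots in $\kappa$, one of the nontrivial roots $\lambda_1(\kappa)$ of (\ref{tanz}) must drift to $0$ as $\kappa^2\to 12$, which is the stated merger with the $k=0$ member of the $(2k\pi)^2$ family.

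The last claim, that $\lambda = -\kappa^2$ is the unique negative real root, is the main obstacle. Substituting $\lambda = -\mu$ with $\mu>0$ and using $\sin(i\sqrt{\mu}/2) = i\sinh(\sqrt{\mu}/2)$, I reduce the vanishing of $\sqrt{\lambda}\,Det$ to either $\mu = \kappa^2$ or the real transcendental equation $\tanh(s) = s(1 - 4s^2/\kappa^2)$ with $s = \sqrt{\mu}/2 \in (0,\kappa/2)$ (outside this interval the right-hand side is non-positive, contradicting $\tanh(s)>0$). The expansion at the origin gives $\tanh(s) - s(1 - 4s^2/\kappa^2) = s^3(12-\kappa^2)/(3\kappa^2) + O(s^5)$, so the sign change once again bifurcates at $\kappa^2 = 12$. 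To exclude extra negative roots I would examine the derivative $\operatorname{sech}^2(s) - 1 + 12s^2/\kappa^2 = 12s^2/\kappa^2 - \tanh^2(s)$ and show it is positive on $(0,\kappa/2)$, which is immediate from $\tanh(s) < s$ whenever $\kappa^2 < 12$. The large-$\kappa$ regime would demand a finer convexity/comparison argument, and this is where I expect the heaviest lifting in the proof.
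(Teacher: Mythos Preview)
Your approach is essentially the same as the paper's---both read the zeros directly off the factored form (\ref{denum}) and Taylor-expand at $\lambda=0$ to extract the leading coefficient $(\kappa^2/12-1)$---but you supply far more detail than the paper does. The paper's proof consists of two sentences: the non-zero roots ``follow directly from (\ref{denum})'', and the $\lambda=0$ behavior is handled by the expansion
\[
\sin(\sqrt{\lambda})(1+\kappa^2/\lambda)\Bigl[(\kappa^2/12-1)\lambda+\kappa^2\lambda^2/120+\cdots\Bigr].
\]
That is all. The paper does not verify simplicity of the $(2k\pi)^2$ roots or of the $\lambda_m$, does not check $B(-\kappa^2)\ne0$, and does not attempt the ``only negative root'' claim at all. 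Your repackaging $\sqrt{\lambda}\,Det(\lambda)=\frac{\lambda+\kappa^2}{\lambda}B(\lambda)$ with $B$ entire, the explicit computation $B'((2k\pi)^2)=-\tfrac12(4k^2\pi^2+\kappa^2)$, and the $\tanh$-reduction for $\lambda<0$ are all genuine additions.

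Your instinct that the negative-root claim is where the real difficulty lies is exactly right, and the paper does not overcome it either. In fact your own reduction shows the obstacle is not merely technical: for $\kappa^2>12$ one has $f(s):=\tanh(s)-s(1-4s^2/\kappa^2)\sim\frac{12-\kappa^2}{3\kappa^2}s^3<0$ near $s=0$, while $f(\kappa/2)=\tanh(\kappa/2)>0$, so by the intermediate value theorem $f$ \emph{does} vanish somewhere in $(0,\kappa/2)$, yielding a root of (\ref{tanz}) with $\lambda\in(-\kappa^2,0)$. Thus the assertion that $\lambda=-\kappa^2$ is the only negative root appears to fail for $\kappa^2>12$ as stated. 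This is consistent with the paper's own Remark~1.1, which concedes that positivity of the $\lambda_k$ is only assumed on numerical grounds and not proved. So no ``finer convexity/comparison argument'' will rescue the claim in that regime; your hesitation there is warranted, and the paper simply does not address it.
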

\begin{proof}The case of non-zero roots 
	follows directly from (\ref{denum}). 
	
	To evaluate the case $\lambda=0$, let us rewrite the leading Taylor 
	expansion of the right side of (\ref{denum}) as a function of $\sa$. Using $\tan(\lambda^{1/2})=\sa+\lambda^{3/2}/3+2\lambda^{5/2}/15 \dots$, we expand (\ref{denum}) and obtain  that the leading terms in powers of $\lambda^{1/2}$ are
	$$ \sin(\lambda^{1/2})(1+\kappa^2/\lambda ) \left[\frac{2\kappa^2}{\sqrt{\lambda}}(\sqrt{\lambda}/2 + (\sqrt{\lambda}/2)^3/3+2(\sqrt{\lambda}/2)^5/15+\ldots )-\kappa^2-\lambda \right]= $$
	$$
	\sin(\lambda^{1/2})(1+\kappa^2/\lambda ) \left[(\frac{\kappa^2}{12}-1)\lambda +\kappa^2 \lambda^2/(120)+\ldots \right] \ . $$ 
\end{proof}
{\bf Conjecture:} {\it All rots of (\ref{tanz}) are real and simple.}

At this stage we can only show
that 
	there exists $R(\kappa)>0$ such that all roots of (\ref{tanz}) outside the disc $\{|z|<R(\kappa)\}$ are real and simple. Numerical test verifies, for all selected values of $\kappa$, that all roots inside the disc are real and simple as well. 

\begin{proof} Substitute $z=\lambda^{1/2}/2$ and set $f(z)=\tan(z)$, $g(z)=z\kappa^{1/2}(4z^2+\kappa^2)$ and $h(z)=g(z)-f(z)$.   Consider the orbit $\Gamma_n$  in the complex plan obtained by the edges of the square whose vertices are at $(n\pi, n\pi)$, $(-n\pi, n\pi)$, $(-n\pi, -n\pi)$, $(n\pi, -n\pi)$ where $n\in \mathbb{N}$ is large enough. The function $f$ is bounded uniformly along this orbit, while $|g|\rightarrow\infty$ uniformly on $\Gamma_n$ where $n\rightarrow \infty$. In particular, for $n$ large enough, $|g(z)|>|f(z)|$ for any $z\in \Gamma_n$. By the argument principle, $$\frac{1}{2\pi} \oint_{\Gamma_n}\frac{h^{'}}{h}dz = \frac{1}{2\pi}\oint_{\Gamma_n}\frac{g^{'}}{g}dz =\{number \ of\ zeroes \ of \ g\}$$
so
$$\frac{1}{2\pi} \oint_{\Gamma_n}\frac{h^{'}}{h}dz =\#\{zeroes\  of\  h\} -\#\{ poles \ of\  h\}=\#\{zeroes \ of\ g\}$$
in the interior of the square whose boundary is $\Gamma_n$.  Since $g$ is a polynomial  of order $3$, the number of zeroes of $g$ inside the square is $3$ for any $n$ large enough. Since all the poles of $h$ are identical to the poles of $f$, which  are  given by $(\pm k+1/2)\pi$ on the real line, $k\in \mathbb{N}\cup \{0\}$, there are exactly two poles of $h$ in $inter (\Gamma_{k+1})-inter(\Gamma_k)$ for all $k$ large enough, namely at $x=(\pm k+1/2)\pi$. Thus, there are exactly two zeroes (or one zero of order $2$) in $inter(\Gamma_{k+1})-inter(\Gamma_k)$. Evidently, there are two real roots in this domain, one in each interval $(k\pi, (k+1)\pi)$ and $(-(k+1)\pi, -k\pi)$. Thus these are the only roots in $inter (\Gamma_{k+1})-inter(\Gamma_k)$. It follows, in particular, that $h$ has only real roots outside a large enough square. 
\end{proof}
\begin{lemma}\label{eigen}
	The eigenvalues of the operator $d^2/dx^2$ under boundary conditions (\ref{pnpv}) are
	$\mu_{k}=(2k)^2\pi^2$, the roots $\lambda_m$ of (\ref{tanz}) and $\lambda_0=-\kappa^2$. 
	
The corresponding {\em unnormalized} eigenfunctions are:
\begin{itemize}
	\item $\mu_k=(2k\pi)^2 : \ \ \ \psi^{(1)}_k(x)=\cos(2k\pi x)$.  $k\in \mathbb{N}\cup \{0\}$.
	\item $\lambda_m \ \ (\ref{tanz}):  \ \ \ \psi^{(2)}_m(x)=\sin(\lambda_m^{1/2} (x-1/2))$, \   $m\in \mathbb{N}$.
	\item $\lambda_0=-\kappa^2$: \ \  $\psi_0(x)=\cosh(\kappa(x-1/2))$. 
	\item If $\kappa^2=12$ then $\lambda_1=0$ and \\
	$\psi^{(2)}_1(x)=x-1/2= \lim_{\lambda\rightarrow 0} \lambda^{-1/2}\sin(\sa(x-1/2))$. 
\end{itemize}
\end{lemma}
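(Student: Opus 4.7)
The plan is to exploit the fact that any candidate eigenfunction of $d^2/dx^2$ must take the form $\psi(x)=a\sin(\sqrt{\lambda}x)+b\cos(\sqrt{\lambda}x)$, and that substituting this ansatz into the integral conditions (\ref{pnpv}) has already produced the homogeneous linear system (\ref{eq10})--(\ref{eq20}) with determinant $Det(\lambda)$. By Lemma \ref{eval} the real zeros of $Det$ are precisely $(2k\pi)^2$ ($k\in\mathbb{N}$), the roots $\lambda_m$ of (\ref{tanz}), the ground value $\lambda_0=-\kappa^2$, and the exceptional $\lambda=0$ in the limit $\kappa^2\to 12$. It therefore suffices, at each such root, to recover a nontrivial kernel vector $(a,b)$ and to identify $\psi$ with the claimed closed-form expression.

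For $\lambda=(2k\pi)^2$, the identities $\sin(\sqrt{\lambda})=0$ and $\cos(\sqrt{\lambda})=1$ collapse (\ref{eq10}) to $a(\sqrt{\lambda}+\kappa^2/\sqrt{\lambda})=0$, which forces $a=0$, while (\ref{eq20}) becomes vacuous; hence $\psi^{(1)}_k(x)=\cos(2k\pi x)$. For a root $\lambda_m$ of (\ref{tanz}), I would apply the half-angle identities $\sin(\sqrt{\lambda})=2\sin(\sqrt{\lambda}/2)\cos(\sqrt{\lambda}/2)$ and $1-\cos(\sqrt{\lambda})=2\sin^2(\sqrt{\lambda}/2)$ to (\ref{eq10}), reducing it to
\[
\cos(\sqrt{\lambda_m}/2)\,\frac{\lambda_m+\kappa^2}{\sqrt{\lambda_m}}\;=\;\frac{2\kappa^2}{\lambda_m}\sin(\sqrt{\lambda_m}/2),
\]
which is exactly the defining equation (\ref{tanz}) for $\lambda_m$. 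The resulting proportionality $(a,b)\propto(\cos(\sqrt{\lambda_m}/2),-\sin(\sqrt{\lambda_m}/2))$ yields, via the angle-subtraction formula, precisely $\psi^{(2)}_m(x)=\sin(\sqrt{\lambda_m}(x-1/2))$; since $Det(\lambda_m)=0$, equation (\ref{eq20}) is then automatically satisfied.

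The degenerate root $\lambda_0=-\kappa^2$ requires a different tactic, since the factor $(1+\kappa^2/\lambda)$ in the factorization (\ref{denum}) vanishes and the algebraic system drops rank in a nonstandard way. Rather than solving the rank-deficient system, the cleanest route is to propose the even solution $\psi_0(x)=\cosh(\kappa(x-1/2))$ of $v''=\kappa^2 v$ and verify both non-local boundary conditions by direct integration: the substitution $u=s-1/2$ sends $\int_0^1(1-s)\cosh(\kappa(s-1/2))\,ds$ to the even integral $\tfrac12\int_{-1/2}^{1/2}\cosh(\kappa u)\,du=\sinh(\kappa/2)/\kappa$ (the odd part cancels), giving $-\kappa^2$ times this equal to $\psi_0'(0)=-\kappa\sinh(\kappa/2)$; the condition at $x=1$ follows symmetrically. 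Finally, in the exceptional case $\kappa^2=12$ one obtains the $\lambda=0$ eigenfunction by passing to the limit $\lambda\to 0$ in the $\lambda_m$-family, $\lim_{\lambda\to 0}\lambda^{-1/2}\sin(\sqrt{\lambda}(x-1/2))=x-1/2$, and verifying by a short direct computation that $\int_0^1(1-s)(s-1/2)\,ds=-1/12$, so the non-local boundary condition $v'(0)=1$ holds exactly when $\kappa^2=12$.

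The main obstacle is bookkeeping in the two degenerate cases: the algebraic system (\ref{eq10})--(\ref{eq20}) loses rank at $\lambda=-\kappa^2$ (and at $\lambda=0$ when $\kappa^2=12$), so one cannot blindly read off the kernel vector there. The safest strategy is the guess-and-verify route above, so the argument ultimately reduces to routine trigonometric manipulation at the generic roots together with two explicit verifications in the degenerate cases, all resting on the classification of roots already established in Lemma \ref{eval}.
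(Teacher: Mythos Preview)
Your proposal is correct and follows essentially the same route as the paper: reduce to the linear system (\ref{eq10})--(\ref{eq20}), invoke Lemma~\ref{eval} for the classification of roots of $Det$, and then read off the kernel vector $(a,b)$ at each root. The paper's own proof is a two-line sketch of exactly this, while you supply the explicit trigonometric reductions and the direct verifications at $\lambda_0=-\kappa^2$ and (when $\kappa^2=12$) at $\lambda=0$; the only point the paper makes more explicitly than you do is that $\lambda=0$ is \emph{not} an eigenvalue when $\kappa^2\neq 12$ despite being a root of $Det$, because the trigonometric ansatz degenerates there and the full linear solution $ax+b$ is forced to be trivial.
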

\begin{proof}
The proof follows by  Lemma \ref{eval} and (\ref{eq10}, \ref{eq20}). If $\kappa^2\not=12$ then $0$ is {\em not} an eigenvalue, even though it is a root of $Det$. The reason is that the coefficients of (\ref{eq10}, \ref{eq20}) are degenerate in that case. However, if $\kappa^2=12$ then the first root $\lambda_1$ of (\ref{tanz}) is zero, and the eigenfunction follows by substitution. 
\end{proof}
\subsection{The resolvent}
 The resolvent operator for Neumann problem on $[0,1]$ is expressed  in terms of the eigenvalues and eigenfunctions of the operator:
$$R_N(\lambda, x,y)=\frac{1}{\lambda}+ \frac{1}{2}\sum_{k=1}^\infty  \frac{\cos(k\pi x)\cos(k\pi y)}{\lambda- k^2\pi^2}$$
and
\begin{multline}\label{Nop}    \int_0^1 (1-s) R_N(\lambda,s,y) = \frac{1}{2\lambda}-A(\lambda, y)  \ ,  \\ 
\int_0^1 s R_N(\lambda,s,y) = \frac{1}{2\lambda}+A(\lambda, y)\  \end{multline}
where 
\be\label{Ay}A(\lambda, y):= \frac{1}{\pi^2} \sum_{m=1}^\infty \frac{2\cos((2m+1)\pi y)}{(2m+1)^2(\pi^2(2m+1)^2-\lambda)} \ . \ee
The resolvent $R$ corresponding to the boundary condition (\ref{pnpv}) can be written as
$$R(\lambda, x,y)= R_N(\lambda, x,y) + a(\lambda,y)\sin( \lambda^{1/2} x) +b(\lambda,y)\cos(\lambda^{1/2} x) $$
From the boundary conditions of (\ref{pnpv}) and (\ref{Nop}) we obtain

\begin{multline}\label{eq1}a\left( \lambda^{1/2}+\kappa^2\lambda^{-1/2} -   \frac{\kappa^2}{\lambda}\sin(\lambda^{1/2})\right)+b\kappa^2\left( \frac{1-\cos\lambda^{1/2}}{\lambda}\right)+ \\ \kappa^2 \left[  \frac{1}{2\lambda}-A(\lambda, y)  \right]=0\end{multline}
\begin{multline}\label{eq2}  a\left(   \lambda^{1/2}\cos(\lambda^{1/2})-\frac{\kappa^2 \sin(\lambda^{1/2})}{\lambda} +\frac{\kappa^2\cos(\lambda^{1/2})}{\lambda^{1/2}}
\right)\\
+b\left(
-\lambda^{1/2}\sin(\lambda^{1/2}) -\frac{\kappa^2\sin(\lambda^{1/2})}{\lambda^{1/2}} - \frac{\kappa^2\cos(\lambda^{1/2})-1}{\lambda}
\right) \\
-\kappa^2\left[  \frac{1}{2\lambda}+A(\lambda, y)  \right]=0  \end{multline}

We can now solve (\ref{eq1},\ref{eq2})  for any $\lambda\not=0$ which is not a root of $Det$, 

\begin{multline}\label{detcom}a(\lambda,y)= \kappa^2 Det^{-1}(\lambda)\left\{\frac{1}{2\lambda}\left[\sin(\sa)\left(\sa +\frac{\kappa^2}{\sa}\right)-2\kappa^2\frac{1-\cos(\sa)}{\lambda}\right]\right. \\
\left. -A(\lambda,y) \sin(\sa)\left(\sa +\frac{\kappa^2}{\sa}\right)\right\}\\
b(\lambda,y)= \kappa^2 Det^{-1}(\lambda)\left\{ \frac{1}{2\lambda}\left[ \left(\sa +\frac{\kappa^2}{\sa}\right)(1+\cos(\sa))-\frac{2\kappa^2 \sin(\sa)}{\lambda}\right]\right. \\
\left. +A(\lambda,y) \left(\sa +\frac{\kappa^2}{\sa}\right)(1-\cos(\sa))\right\}
\end{multline}  
After some trigonometric manipulations on (\ref{detcom}, \ref{denum}) 
we obtain
\begin{multline}\label{17} R(\lambda, x,y)-R_N(\lambda,x,y)= a(\lambda,y)\sin(\sa x) + b(\lambda, y)\cos(\sa x) =\\
 \frac{\kappa^2 \sin(\sa/2)A(\lambda, y) (\sa+\kappa^2/\sa)}{Det(\lambda)}
\sin\left(\frac{\sa(1-2x)}{2}\right) \\ 
- \frac{\kappa^2 \cos(\sa(x-1/2))}{2\sa\sin(\sa/2)(\lambda+\kappa^2)}
\end{multline}
\section{The heat kernel} \label{heatk}
To obtain the  heat kernel corresponding to the equation (\ref{pnpv}) we use (\ref{hk}) to obtain $K(x,y,t)=$
\be\label{Kerfull}\frac{1}{2\pi i}\lim_{T\rightarrow\infty}\int_{\gamma-iT}^{\gamma+iT} e^{-\lambda t}(R(\lambda,x,y)-R_N(\lambda,x,y))d\lambda +
\int_{\gamma-iT}^{\gamma+iT} e^{-\lambda t}R_N(\lambda,x,y)d\lambda\ee
We now recall that the second term above is just the heat kernel of the {\em Neumann} problem. This can be expanded in eigenfunctions:
\begin{multline}\label{KerN}K_N(x,y,t)= \frac{1}{2\pi i}\lim_{T\rightarrow\infty}\int_{\gamma-iT}^{\gamma+iT} e^{-\lambda t}R_N(\lambda,x,y)
d\lambda \\ = 1+\frac{1}{2}\sum_{k=1}^\infty  e^{-k^2\pi^2t}cos(k\pi x)\cos(k\pi y)\end{multline}

Then we calculate the residues  of $(R-R_N)e^{-\lambda t}$ using (\ref{17}). The residue at the pole $\lambda=-\kappa^2$ due to the second term in (\ref{17}) is 
\be\label{res0}\frac{\kappa\cosh(\kappa/2)\cosh(\kappa(x-1/2))}{2\sinh(\kappa/2)}e^{\kappa^2 t} \ee
Let us now evaluate the other poles of   (\ref{17}). The poles of the first term (the coefficients of $\sin(\sa(x-1/2)$) are originated by two sources: Since $\sin(\sa/2)/Det(\lambda)$  has no singularity at $\lambda=(2m\pi)^2$, the only singularity due to $Det(\lambda)$ are the roots of (\ref{tanz}), i.e at $\lambda=\lambda_m$. The residue Theorem at this singularities yield
\be\label{resAm} \frac{\kappa^2 A(\lambda _m, y) (\sqrt{\lambda_m}+\kappa^2/\sqrt{\lambda_m})}{2\sqrt{\lambda_m}Det^{'}(\lambda_m)}
\sin\left(\frac{\sqrt{\lambda_m}(1-2x)}{2}\right) e^{-\lambda_m t}\ . \ee
where $A(\lambda,y)$ as given in (\ref{Ay}). However, the first term of (\ref{17}) contains also the poles at $\lambda=(2k+1)\pi$ due to the singularity of $A(\cdot,y)$ at these points. A direct calculation implies that the residue at these poles are precisely 
\be\label{resnoA}-\frac{1}{2}\cos((2k+1)x) \cos((2k+1)y)e^{-4k^2\pi^2 t} \  \ee
which eliminate the sum of odd indices in the Neumann heat kernel  (\ref{KerN}). 

The second term in (\ref{17}) also contain poles at $\lambda=(2k\pi)^2$, $k\in\mathbb{N}\cup \{0\}$. The sum of the resides  is 
\be\label{restoo}\kappa^2\sum_{k=1}^\infty \frac{\cos(2k\pi x) e^{-4k^2\pi^2 t}}{2 k\pi(4k^2\pi^2+\kappa^2)} -1 \ . \ee


Summarizing (\ref{res0}-\ref{restoo}) in (\ref{Kerfull}), using (\ref{KerN})     
and taking into account (\ref{utov}) we obtain 
\begin{multline} \label{hetk}K(x,y,t)=\frac{\kappa\cosh(\kappa/2)\cosh(\kappa(x-1/2))}{2\sinh(\kappa/2)}\\+\sum_{k=1}^\infty
\frac{\kappa^2 A(\lambda _k, y) (\sqrt{\lambda_k}+\kappa^2/\sqrt{\lambda_k})}{2\sqrt{\lambda_k}Det^{'}(\lambda_k)}
\sin\left(\frac{\sqrt{\lambda_k}(1-2x)}{2}\right) e^{-(\lambda_k+\kappa^2) t} \\
+ \frac{1}{2}\sum_{k=1}^\infty  \cos(2k\pi x)
\left( \cos(2k\pi y)+\frac{1}{ k\pi(4k^2\pi^2+\kappa^2)} 
\right)
e^{-(4k^2\pi^2 +\kappa^2)t}\end{multline}
\begin{cor}\label{4.1}
	The real eigenfunctions of the {\em adjoint }  operator ${\cal L}_0^\dag$ under boundary conditions ${\cal D}$  (\ref{pnpv})  are:
	\begin{itemize}
		\item $\mu_k: \ \ \ \phi^{(1)}_k(y)=\cos(2k\pi y)+\frac{1}{ k\pi(4k^2\pi^2+\kappa^2)} $.  $k\in \mathbb{N}$.
		\item $\lambda_k: \ \ \ \phi^{(2)}_k(y)=A(\lambda_k,y)$, \   $k\in \mathbb{N}$.
		\item $\lambda_0=-\kappa^2$: \ \ $\phi^{(0)}(y)=1$. 
	
	\end{itemize}
\end{cor}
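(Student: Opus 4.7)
The plan is to read off the dual eigenpairs directly from the spectral expansion (\ref{hetk}) established in Theorem \ref{th1}. Denote by $\mathcal{L}$ the operator $d^2/dx^2-\kappa^2$ with the non-local boundary conditions (\ref{bcc3}) and by $\mathcal{L}^*$ its formal adjoint. The guiding principle is that if
$$K(x,y,t)=\sum_k c_k\,\psi_k(x)\,\phi_k(y)\,e^{-\mu_k t},$$
then differentiating in $t$ and using the forward equation $\partial_t K=\mathcal{L}_x K$ shows $\mathcal{L}\psi_k=-\mu_k\psi_k$, while the dual relation $\partial_t K=\mathcal{L}^*_y K$ shows $\mathcal{L}^*\phi_k=-\mu_k\phi_k$. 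Hence the $y$-factor of each summand in (\ref{hetk}), up to a scalar normalization, is the dual eigenfunction at the matching eigenvalue.

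With this in hand I would inspect the three groups of terms in (\ref{hetk}) in turn. The steady-state term has $x$-factor proportional to $\cosh(\kappa(x-1/2))=\psi_0(x)$ from Lemma \ref{eigen}, time factor $e^{-(\lambda_0+\kappa^2)t}=1$, and a constant $y$-factor, giving $\phi^{(0)}(y)=1$ at $\lambda_0=-\kappa^2$. In the second family, the $x$-factor $\sin(\sqrt{\lambda_k}(1-2x)/2)$ equals (up to sign) the Lemma \ref{eigen} eigenfunction $\psi^{(2)}_k(x)=\sin(\sqrt{\lambda_k}(x-1/2))$; absorbing the $\lambda_k$-dependent scalar $\kappa^2(\sqrt{\lambda_k}+\kappa^2/\sqrt{\lambda_k})/(2\sqrt{\lambda_k}\,Det'(\lambda_k))$ into normalization yields $\phi^{(2)}_k(y)=A(\lambda_k,y)$ at eigenvalue $\lambda_k$. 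In the third family, the $x$-factor $\cos(2k\pi x)=\psi^{(1)}_k(x)$ is paired with the $y$-factor $\cos(2k\pi y)+1/(k\pi(4k^2\pi^2+\kappa^2))=\phi^{(1)}_k(y)$ at $\mu_k=(2k\pi)^2$.

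The main subtlety, already flagged by the authors, is that $\mathcal{L}^*$ is not available as an explicit differential operator: the non-locality of (\ref{bcc3}) propagates into the adjoint structure, so the eigenvalue equations $\mathcal{L}^*\phi^{(i)}_k=-\mu^{(i)}_k\phi^{(i)}_k$ cannot be checked independently by direct substitution. The spectral argument bypasses this obstacle entirely, since the adjoint eigenfunctions are intrinsically defined through the $y$-factors of the spectral projectors of $K$ and inherit their eigenvalue property from the duality $\partial_t K=\mathcal{L}_x K=\mathcal{L}^*_y K$ alone, without requiring a concrete realization of $\mathcal{L}^*$.
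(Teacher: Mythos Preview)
Your proposal is correct and matches the paper's own approach: the corollary is stated without a separate proof, being read off directly from the $y$-factors of the spectral expansion (\ref{hetk}) just derived, exactly as you do. Your explicit invocation of the duality $\partial_t K=\mathcal{L}_x K=\mathcal{L}^*_y K$ and the matching against the $\psi$-eigenfunctions of Lemma~\ref{eigen} in fact supplies more justification than the paper itself provides.
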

\section{Conclusions}\label{conc}
The leading term in the eigenfunctions expansions of the PNP equation (\ref{dhe}, \ref{bcc3}) is the stationary term proportional to $\cosh(\kappa(x-1/2))$. 
The other modes decay exponentially. Out of these decaying mode, the leading one decays as $\exp(-(\lambda_1+\kappa^2)t)$ where $\lambda_1\in(0, \pi^2)$. In general, the decaying modes correspond to two sets: $\mu_n$ decay as $\exp(-(4n^2\pi^2+\kappa^2)t)$, and $\lambda_n$ decay as $\exp(-(\lambda_n+\kappa^2)t)$ where $\lambda_n\in (4(n-1)^2\pi^2, (2n-1)^2\pi^2)$ . 

The heat kernel associated with the operator ${\cal L}_0:= d^2/dx^2$  on the domain ${\cal D}$ satisfying the boundary conditions (\ref{bcc3}) is given by (\ref{calL0}). The operator ${\cal L}_0$ itself, acting on a function $h\in {\cal D}$, takes the form of a formal series of the eigenstates $\psi_n^{(1,2)}$ and $\psi_0$:
\begin{multline} {\cal L}_0 h(x)= \frac{\partial}{\partial t} K|_{t=0}*h=\\
\frac{\kappa^3\cosh(\kappa/2)}{2\sinh(\kappa/2)}<\phi_0,h>\psi_0(x) \\ -\sum_{n=1}^\infty\lambda_n
\frac{\kappa^2 (\sqrt{\lambda_n}+\kappa^2/\sqrt{\lambda_n})}{2\sqrt{\lambda_n}Det^{'}(\lambda_n)}
<\phi_n^{(1)},h>\psi^{(1)}_n(x)\\
-2n^2\pi^2 \sum_{n=1}^\infty <\phi_n^{(2)},h>\psi_n^{(2)}(x)
\end{multline}
where $<\phi,h>:= \int_0^1 h(y)\phi(y)dy$. 
An interesting conclusion concerns  the adjoint of the operator ${\cal L}_0^\dag$.  Its heat kernel $K^\dag$ is obtained by swapping $x$ and $y$ in $K$, namely $K^\dag(x,y,t)=K(y,x,t)$. Thus
\begin{multline} {\cal L}_0^\dag h(x)=
\frac{\kappa^3\cosh(\kappa/2)}{2\sinh(\kappa/2)}<\psi_0,h>\phi_0(x) \\ -\sum_{n=1}^\infty\lambda_n
\frac{\kappa^2 (\sqrt{\lambda_n}+\kappa^2/\sqrt{\lambda_n})}{2\sqrt{\lambda_n}Det^{'}(\lambda_n)}
<\psi_n^{(1)},h>\phi^{(1)}_n(x)\\
-2n^2\pi^2 \sum_{n=1}^\infty <\psi_n^{(2)},h>\phi_n^{(2)}(x)
\end{multline}

The associated eigenfunctions, given by $ \phi^{(1)}_k(y)=\cos(2k\pi y)+\frac{1}{ k\pi(4k^2\pi^2+\kappa^2)} $ and $\phi^{(2)}_k(y)=A(\lambda_k,y)$ are not trigonometric functions. In  particular we {\em cannot} identify 
${\cal L}_0^\dag$ with $d^2/dx^2$ on a certain domain ${\cal D}^\dag$, as we did for ${\cal L}_0$. 

\vskip .3in 
\noindent
{\bf Open question:} \ {\it Find an explicit expression  for  the generator to the adjoint operator ${\cal L}_0^\dag$ and its domain  ${\cal D}^\dag$. 
	
	\vskip .5in 
	
	{\bf Acknowledgment} : This research was supported by the ISF research  grant 296/20. 
\appendix
\section{Appendix}

\subsection{From resulvent to heat kernel}
  From the resolvent to the  heat kernel
 Let $U(\lambda,x)$ be a solution of 
 \be\label{res} U_{xx} +\lambda U+f=0 \    \ee
 satisfying a well posed  boundary conditions,
where $\lambda\in \mathbb{C}$.  
Then $$U(\lambda, x)=\int_0^1 R(\lambda, x,y)f(y) dy $$
where $R$ is the Resolvent:
$$ \frac{\partial^2 R}{\partial x^2} + \lambda R + \delta_{x-y}=0$$

Suppose $U$ is analytic, as function of $\lambda$, in the half plane $Re(\lambda)\leq \gamma$ for some $\gamma\in\R$. Then
 \be\label{hk} u(x,t)=\frac{1}{2\pi i}\lim_{T\rightarrow\infty}\int_{\gamma-iT}^{\gamma+iT} e^{-\lambda t}U(\lambda,x)d\lambda \ \ee
 is the solution of (\ref{pnpv}). Indeed
 $$u_t=-\frac{1}{2\pi i}\lim_{T\rightarrow\infty}\int_{\gamma-iT}^{\gamma+iT} \lambda e^{-\lambda t}U(\lambda,x)d\lambda $$
 while, by (\ref{res}), 
 $$ u_{xx}= \frac{1}{2\pi i}\lim_{T\rightarrow\infty}\int_{\gamma-iT}^{\gamma+iT} e^{-\lambda t}U_{xx}(\lambda,x)d\lambda =\frac{1}{2\pi i} \lim_{T\rightarrow\infty}\int_{\gamma-iT}^{\gamma+iT} e^{-\lambda t}[-f-\lambda U(\lambda,x)]d\lambda  $$
 $$= u_t-f(x)\frac{1}{2\pi i} \lim_{T\rightarrow\infty}\int_{\gamma-iT}^{\gamma+iT} e^{-\lambda t}d\lambda  \ , $$
 while
 $$\frac{1}{2\pi i} \lim_{T\rightarrow\infty}\int_{\gamma-iT}^{\gamma+iT} e^{-\lambda t}d\lambda =\frac{1}{\pi }e^{-\gamma t}\lim_{T\rightarrow\infty}  \frac{\sin(tT)}{t}=e^{-\gamma t} \delta_{t=0}=\delta_{t=0}$$
 as a distribution.

The heat kernel can, then, be written as 
\be\label{Kernel} K(x,y,t)= \frac{1}{2\pi i}\lim_{T\rightarrow\infty}\int_{\gamma-iT}^{\gamma+iT} e^{-\lambda t}R(\lambda,x,y)d\lambda 
\ee
where $t\geq 0$ and $x,y\in [0,1]^2$.

\end{document}